\newcommand{\ket}[1]{\left\vert{#1}\right\rangle}
\newcommand{\C}{{\mathcal C}}
\newtheorem{notation}[theorem]{Notation}{\bfseries}{\itshape}
\begin{document}

\begin{frontmatter}
\title{Quantum Speedup and Categorical Distributivity}
\author{Peter Hines}
\institute{University of York \\ 
\email{peter.hines@york.ac.uk}}
\end{frontmatter}

\maketitle 
\noindent

\begin{abstract}This paper studies one of the best-known quantum algorithms --- Shor's factorisation algorithm --- via categorical distributivity. A key aim of the paper is to provide a minimal set of categorical requirements for key parts of the algorithm, in order to establish the most general setting in which the required operations may be performed efficiently.

We demonstrate that Laplaza's theory of coherence for distributivity \cite{ML1,ML2} provides a purely categorical proof of the operational equivalence of two quantum circuits, with the notable property that one is exponentially more efficient than the other. This equivalence also exists in a wide range of categories.  

When applied to the category of finite-dimensional Hilbert spaces, we recover the usual efficient implementation of the quantum oracles at the heart of both Shor's algorithm and quantum period-finding generally; however, it is also applicable in a much wider range of settings.
\end{abstract}

\begin{keywords}Category Theory, Quantum Computing, Shor's Algorithm, Monoidal Tensors, Distributivity, Coherence. \end{keywords}


\section{Introduction}
\subsection{Shor's Algorithm: Oracles and Quantum Fourier Transforms}
The structure of Shor's algorithm is deceptively simple: an oracle which acts classically on the computational basis computes modular exponentials; this oracle is conjugated by the circuit for the quantum Fourier transform. Up to some relatively simple classical post-processing (computing continued fraction expansions), this is enough to find the prime factors of a number in an exponentially fast time -- at least, as compared with the {\em best known} classical algorithm.


The traditional view of Shor's algorithm and other quantum period-finding algorithms is that their power arises from the quantum Fourier transform; \cite{NC} lists Shor's algorithm in the section ``Applications of the Fourier transform''. This was challenged in \cite{ALM}, where it was demonstrated that the quantum Fourier transform has a {\em low bubble width} circuit --- and any quantum algorithm that is built entirely from low bubble-width circuits has an efficient classical simulation. Thus, it appears that the obstacle to an efficient classical simulation of Shor's algorithm is the oracle for modular exponentiation, rather than the conjugating quantum Fourier transform. 
 
 The conclusions drawn in \cite{ALM} (the claim that the quantum power of Shor's algorithm arises from the central oracle) were, and remain, controversial. However, further evidence to support this claim was provided in \cite{YS}, where it was demonstrated that modular exponentiation, in of itself, is sufficient.  From \cite{YS}: {\em Any classical algorithm that can efficiently simulate the circuit implementing modular exponentiation for general product input states and product state measurements on the output, allows for an efficient simulation of the entire Shor algorithm on a classical computer.} A special case  of this, as noted in \cite{YS},  would be any tensor contraction scheme for the modular exponentiation circuit.

\subsection{The Aims of this Paper}
This paper describes the circuit for modular exponentiation used in \cite{SH} in purely categorical terms. The motivation is to find the most general structures in which this precise form of the oracle may be implemented. We therefore avoid, where possible, categorical machinery that is closely or uniquely associated with the theory of finite-dimensional Hilbert spaces.\footnote{In particular, the constructions we will present are significantly simpler in the presence of compact closure and biproducts -- two categorical properties closely associated with quantum mechanics.  However, neither of these categorical properties are necessary, so we work in the  more general setting.} Instead, we will simply require a category with two monoidal tensors related by a notion of distributivity.  As this is established for abstract categories, any concrete category satisfying this simple requirement is sufficient.

\subsection{The Structure of the Paper}
This paper is divided into two sections: pure category theory, and concrete realisations of this abstract theory. 
\begin{enumerate}
\item We first use the abstract theory of categories with two monoidal tensors related by distributivity to define endofunctors and further categorical operations on such categories. We use these to define an `iterator' operation $!^N(\ )$ on endomorphism monoids of such categories, and 
use Laplaza's theory of coherence for distributivity to give an exponentially efficient factorisation of this operation.
\item The second half of the paper gives a concrete realisation of this operation, and its efficient factorisation, within the quantum circuit paradigm. The $!^N(\ )$ operation has a concrete realisation as the oracle required for quantum period-finding, and its efficient factorisation  is exactly Shor's implementation of modular exponentiation oracle.
\end{enumerate}

\section{Basic Definitions}
Our abstract setting is that of categories with distributivity, defined in \cite{ML1,ML2}:
\begin{definition}\label{basics}
A {\bf category with distributivity} is a category $\mathcal C$  with two distinct symmetric monoidal tensors: the {\bf multiplicative tensor} 
$(\  \otimes \ ) : \C \times \C \rightarrow \C$ 
and the {\bf additive tensor} 
$(\  \oplus \ ) : \C \times \C \rightarrow \C$
 that are related by natural distributivity monomorphisms
\begin{eqnarray}
& & dl_{ABC}: A\otimes (B\oplus C) \rightarrow (A\otimes B)\oplus (A\otimes C) \label{leftdistrib} \\
& & dr_{XYZ}: (X\oplus Y)\otimes Z \rightarrow (X\otimes Z) \oplus (Y\otimes Z) \label{rightdistrib} 
\end{eqnarray}
 satisfying coherence conditions 
laid out in \cite{ML1,ML2}. 
\end{definition}
The required coherence conditions are decidedly non-trivial and form an infinite family of diagrams that are required to commute, although these may be significantly simplified (from \cite{ML1}, {\em ``we are reduced to a finite number of types of diagrams if we drop unnecessary commutativity conditions''}).

\begin{notation}
We adopt the convention of using the Greek alphabet for the structural isomorphisms related to the multiplicative tensor, and the Roman alphabet for the additive tensor. 
We denote the multiplicative associativity and symmetry isomorphisms by $ \tau_{XYZ} : X\otimes (Y\otimes Z) \rightarrow (X\otimes Y)\otimes Z$ and $\sigma_{X,Y}:X\otimes Y \rightarrow Y\otimes X$, and the additive associativity and symmetry by $ t_{XYZ}:X\oplus (Y\oplus Z)\rightarrow (X\oplus Y)\oplus Z$ and $ s_{XY}:X\oplus Y \rightarrow Y\oplus X$. We will frequently appeal to MacLane's coherence theorem for associativity, and treat both the multiplicative and additive tensors as strict.

We will also denote the multiplicative unit object by $I$, and the additive unit object by $0$. 
\end{notation}

A special case that is often considered (e.g. \cite{CLW,RC}) is where the distributivity monomorphisms are in fact isomorphisms. 
\begin{definition}
Let $(\C,\otimes ,\oplus)$ be a category with distributivity. We say that is is {\bf strongly distributive} when the natural distributivity monomorphisms have global inverses, 
\begin{eqnarray}
& & dl^{-1}_{ABC}: (A\otimes B)\oplus (A\otimes C)  \rightarrow A\otimes (B\oplus C)\label{leftdistrib} \\
& & dr^{-1}_{XYZ}: (X\otimes Z) \oplus (Y\otimes Z) \rightarrow (X\oplus Y)\otimes Z \label{rightdistrib} 
\end{eqnarray}
\end{definition}

Strongly distributive categories are a special case of Definition \ref{basics}, so we may still appeal to Laplaza's coherence theorems. Appropriate care will be taken when using commutative diagrams containing inverses of these canonical isomorphisms to ensure that an equivalent result may be derived without the use of inverses. See the proof of Lemma \ref{invexample} for an example of this.  


\subsection{Distinguished Objects, and Copying Functors}
Strongly distributive categories have two distinguished objects: the additive and multiplicative unit objects $0,I\in Ob(\C)$. Their interaction with the two monoidal tensors is given (up to straightforward canonical isomorphism) by the following tables:

\begin{center}
\begin{tabular}{c|ccc}
$\otimes$ & $0$ & \ \ & $I$ \\
\hline 
$0$ & $0$ & &  $0$ \\
\ 	& \	& & \ \\
$I$ & $0$ & & $I$ 
\end{tabular}
\hspace{6em}
\begin{tabular}{c|ccc}
$\oplus$ & $0$ & & $I$ \\
\hline 
$0$ & $0$ & & $I$ \\
	& 	& & \\
$I$ & $I$ & & $I\oplus I$ 
\end{tabular}
\end{center}

Observe that $I\oplus I$ is neither $0$ nor $I$; thus in the absence of any further identities,  
strongly distributive categories have additional distinguished objects. 
\begin{definition}
We define ${\mathbb 2} \in Ob(\mathcal C)$ to be the additive tensor of two multiplicative units, so ${\mathbb 2} = (I\oplus I)$.
\end{definition}
Such objects are considered in \cite{RC}, where -- in the special case that $\otimes$ and $\oplus$ are a product and coproduct respectively -- they generate Boolean algebras. The classical logical interpretation is well-established. As noted in \cite{BCST} the form of  distributivity introduced in \cite{ML1} is entirely unsuitable for linear logic,  since distributivity implies a form of  `copying' operation that we now describe:

\begin{lemma}\label{invexample}
Let $(\C,\otimes ,\oplus)$ be a strongly distributive category. Then
\begin{enumerate}
\item $\mathbb 2 \otimes X \cong X\oplus X$
\item for all $f\in \C(X,Y)$, the following diagram commutes:
\[ \xymatrix{ 
		\mathbb 2 \otimes X \ar[rr]^{1_{\mathbb 2} \otimes f} &  & \mathbb 2 \otimes Y \ar[d]^{dr_{I,I,Y}} \\
X\oplus X \ar[u]^{dr^{-1}_{I,I,X}} \ar[rr]^{f\oplus f} & & Y\oplus Y }
\]
\end{enumerate}
\end{lemma}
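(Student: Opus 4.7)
The plan is to derive both parts directly from the strong distributivity structure, with part (1) being an immediate consequence of $dr$ being an isomorphism, and part (2) being a repackaged naturality square for $dr$. Following the remark preceding the lemma, I would first establish the naturality identity without any use of $dr^{-1}$, and only afterwards rewrite it in the form displayed in the statement.

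For part (1), I would simply chain two isomorphisms: strong distributivity gives
\[
dr_{I,I,X} : (I \oplus I) \otimes X \longrightarrow (I \otimes X) \oplus (I \otimes X),
\]
and the multiplicative unit isomorphism gives $I \otimes X \cong X$ (which, since we are treating the multiplicative tensor as strict, we may take as equality up to MacLane coherence). Composing these and using $\mathbb{2} = I \oplus I$ yields the stated isomorphism $\mathbb{2} \otimes X \cong X \oplus X$. No coherence machinery beyond the statement of strong distributivity is required here.

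For part (2), the key step is the naturality of the right distributivity morphism $dr$ in its third variable. Instantiating naturality at the morphisms $1_I : I \to I$, $1_I : I \to I$ in the first two slots and $f : X \to Y$ in the third, and using $1_I \oplus 1_I = 1_{I\oplus I} = 1_{\mathbb{2}}$ together with $1_I \otimes f = f$ (strict units), I obtain the naturality square
\[
dr_{I,I,Y} \circ (1_{\mathbb{2}} \otimes f) \;=\; (f \oplus f) \circ dr_{I,I,X}.
\]
This is an equation purely in terms of the structural monomorphisms, with no inverses appearing. Precomposing both sides with the global inverse $dr^{-1}_{I,I,X}$ --- which exists precisely by the strong distributivity hypothesis --- immediately gives $dr_{I,I,Y} \circ (1_{\mathbb{2}} \otimes f) \circ dr^{-1}_{I,I,X} = f \oplus f$, which is the commutativity of the displayed diagram.

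I do not expect any serious obstacle; the only subtlety is the methodological one flagged by the author, namely to derive the inverse-free naturality square first so that the use of $dr^{-1}$ is isolated to a single, justified step rather than buried inside the proof. The reliance on Laplaza's coherence theorem is minimal here and only enters implicitly through the strict treatment of units.
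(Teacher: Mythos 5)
Your proposal is correct and follows essentially the same route as the paper: part (1) by composing $dr_{I,I,X}$ with the multiplicative unit isomorphisms, and part (2) by first writing down the inverse-free naturality square for $dr$ and only then introducing $dr^{-1}_{I,I,X}$ using the strong distributivity hypothesis. The methodological point you flag --- isolating the single use of the inverse --- is exactly the one the paper makes in the remark preceding the lemma.
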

\begin{proof}
$ $ \\
\begin{enumerate} 
\item By distributivity, and the fact that $I$ is the unit object for the multiplicative tensor, $\mathbb 2 \otimes A = (I\oplus I )\otimes A \cong (I\otimes A)\oplus (I\otimes A)  \cong A\oplus A$.
\item By naturality, 
the following diagram commutes:
\[ \xymatrix{
(I\oplus I) \otimes X \ar[rr]^{1_{I\oplus I}\otimes f} \ar[d]|{dr_{I,I,X}} & &(I\oplus I)\otimes Y \ar[d]|{dr_{I,I,Y}} \\ 
 X\oplus X \ar[rr]_{f\oplus f} & & Y\oplus Y } \]
As $(\C,\otimes ,\oplus)$ is strongly distributive, we may replace $dr_{IIX}$ in the above diagram by $dr^{-1}_{IIX}$, and reverse the corresponding arrow:
\[
\xymatrix{
(I\oplus I) \otimes X \ar[rr]^{1_{I\oplus I}\otimes f} & & (I\oplus I)\otimes Y \ar[d]|{dr_{I,I,Y}} \\ 
 X\oplus X \ar[rr]_{f\oplus f}  \ar[u]|{dr^{-1}_{I,I,X}}& & Y\oplus Y } 
 \]
\end{enumerate}
\qed \end{proof}


\begin{definition}\label{times2}
Let $(\C ,\otimes ,\oplus)$ be strongly distributive. We define the {\bf copying endofunctor} to be  $\delta = (\mathbb 2 \otimes \underline{\ }):\C \rightarrow \C$.
\end{definition}

This terminology is motivated by the following result: 
\begin{proposition}\label{deltacopying}
Let $\Delta:\C\rightarrow \C\times \C $ be the diagonal functor given by
\begin{itemize}
\item {\bf (Objects)} $\Delta (A) = (A,A)$.
\item {\bf (Arrows)} $\Delta(f)=(f,f)$
\end{itemize}
Then there exists a natural isomorphism (i.e. a natural transformation whose components are isomorphisms) from the composite functor 
$(\underline{\ } \oplus \underline{ \ }) \Delta :\C \rightarrow \C$ to the functor $(\mathbb 2 \otimes \underline{ \ }): \C\rightarrow \C$.

We draw this diagrammatically, as follows:
\[ \xymatrix{ 
								& \C \times \C \ar@{=>}[d] 	  \ar[dr]^{(\underline{ \ }\oplus \underline{\ })} &  \\
		\C	\ar[ur]^{\Delta}	\ar[rr]_{(\mathbb 2 \otimes \underline{\ })}		& 			& \C 
}				\]
\end{proposition}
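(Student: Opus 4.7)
The plan is to exhibit the natural isomorphism explicitly, with components provided by (a strictified form of) the inverse distributivity maps, and then verify naturality directly from Lemma \ref{invexample}.

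First I would define the candidate natural transformation $\eta : (\underline{\ }\oplus \underline{\ })\Delta \Rightarrow (\mathbb{2} \otimes \underline{\ })$ componentwise, setting $\eta_A : A \oplus A \rightarrow \mathbb{2} \otimes A$ to be the composite given, up to the multiplicative unit isomorphisms identifying $I \otimes A$ with $A$, by $dr^{-1}_{I,I,A} : (I\otimes A)\oplus (I\otimes A) \rightarrow (I\oplus I)\otimes A$. Since we are treating both tensors as strict (via MacLane coherence together with the unit-object identities of the table preceding Definition \ref{times2}), this is simply $\eta_A = dr^{-1}_{I,I,A}$, and the computation in part (1) of Lemma \ref{invexample} is precisely the statement that each $\eta_A$ is an isomorphism.

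Next I would verify that the $\eta_A$ are the components of a natural transformation from $(\underline{\ }\oplus\underline{\ })\Delta$ to $(\mathbb{2}\otimes\underline{\ })$. For arbitrary $f \in \C(A,B)$, I need the square
\[
\xymatrix{
A \oplus A \ar[rr]^{f \oplus f} \ar[d]_{\eta_A} & & B \oplus B \ar[d]^{\eta_B} \\
\mathbb{2} \otimes A \ar[rr]_{1_{\mathbb{2}} \otimes f} & & \mathbb{2} \otimes B
}
\]
to commute, but this is exactly the second diagram displayed in the proof of Lemma \ref{invexample} (part 2), so there is nothing further to check. The action of the functor $(\underline{\ }\oplus\underline{\ })\Delta$ on $f$ is the top arrow $f \oplus f$, and the action of $(\mathbb{2}\otimes \underline{\ })$ on $f$ is the bottom arrow $1_{\mathbb{2}}\otimes f$, so the square is a naturality square for $\eta$.

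There is no real obstacle; the substantive work has already been done in Lemma \ref{invexample}, and this proposition merely repackages it as a statement about functors and natural transformations. The only subtlety worth flagging is the implicit use of strictness to suppress the unit isomorphisms $I \otimes A \cong A$ when reading $dr^{-1}_{I,I,A}$ as a morphism $A \oplus A \rightarrow \mathbb{2} \otimes A$; once this is accepted (and it is justified by Laplaza's coherence theorem together with the multiplicative unit behaviour tabulated above), both the bijectivity of the $\eta_A$ and their naturality are immediate consequences of Lemma \ref{invexample}.
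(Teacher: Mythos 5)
Your proof is correct and follows essentially the same route as the paper: the paper likewise takes the components of the natural isomorphism to be the (strictified) distributivity maps $dr_{I,I,X}$ between $\mathbb{2}\otimes X$ and $X\oplus X$ and cites Lemma \ref{invexample} for the naturality square. Your only deviation is to orient the components as $dr^{-1}_{I,I,A}:A\oplus A\rightarrow \mathbb{2}\otimes A$ so as to match the direction stated in the proposition, which is an immaterial difference since the components are isomorphisms.
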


\begin{proof}
For arbitrary $X\in Ob(\C)$, the components of this natural transformation are given by the distributivity isomorphisms $dl_{I,I,X}:\mathbb 2 \otimes X \rightarrow X\oplus X$ (treating units arrows as strict). The required identity then follows from Lemma \ref{invexample}.
\qed \end{proof}

\begin{remark}
At first sight, this `copying' behaviour appears to be at odds with the `no-cloning'  and `no-deleting' theorems \cite{WZ,PB} of quantum information.  However, these are based on tensor products (`multiplicative' tensors), whereas the functor of Definition \ref{times2} acts as a form of copying for the additive structure -- it is related to the fanout operation \cite{HS} rather than the forbidden quantum cloning.
\end{remark}


Iterating a copying operation gives a form of exponential growth, as we demonstrate:

\begin{corollary}
For all $f\in \C(X,Y)$ and $n\geq 1\in \mathbb N$, there exists canonical isomorphisms $\lambda^{(n)}_X:\mathbb 2^{\otimes n}\otimes X \rightarrow \bigoplus_{j=0}^{2^n-1} X$ making the following diagram commute:
\[ \xymatrix
	{
 		{\mathbb 2}^{\otimes n}\otimes X \ar[rr]^{\delta^n (f)} \ar[d]_{\lambda_X} & & {\mathbb 2}^{\otimes n} \otimes Y \\
		\bigoplus_{j=0}^{2^n-1} X \ar[rr]_{\oplus_{n=0}^{2^n-1} f} & & \bigoplus_{j=0}^{2^n-1} Y\ar[u] _{\lambda_Y^{-1}}
	}
\]
\end{corollary}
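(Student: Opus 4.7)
My plan is induction on $n$. The base case $n=1$ is essentially the content of Lemma~\ref{invexample} together with Proposition~\ref{deltacopying}: one sets $\lambda^{(1)}_X$ to be the distributivity isomorphism $\mathbb 2 \otimes X \to X \oplus X$, and the required square for $n=1$ is exactly the naturality square already established there.

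For the inductive step, I write $\mathbb 2^{\otimes(n+1)} \otimes X = \mathbb 2 \otimes (\mathbb 2^{\otimes n} \otimes X)$ using the strict multiplicative associativity convention, and define $\lambda^{(n+1)}_X$ as the composite
\[
\mathbb 2 \otimes (\mathbb 2^{\otimes n} \otimes X) \xrightarrow{1_{\mathbb 2} \otimes \lambda^{(n)}_X} \mathbb 2 \otimes \bigoplus_{j=0}^{2^n-1} X \xrightarrow{\widehat{d}} \bigoplus_{j=0}^{2^n-1}\!\!(\mathbb 2 \otimes X) \xrightarrow{\bigoplus \lambda^{(1)}_X} \bigoplus_{j=0}^{2^n-1}\!\!(X\oplus X) = \bigoplus_{j=0}^{2^{n+1}-1} X,
\]
where $\widehat{d}$ denotes an iterated left-distributivity isomorphism (available because $\C$ is strongly distributive) and the final equality is additive associativity under the strict convention. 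Each factor is invertible, so $\lambda^{(n+1)}_X$ is too.

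To verify that the square for $n+1$ commutes, I would paste together three smaller commuting squares: the square obtained by applying the multiplicative bifunctor $\mathbb 2 \otimes (\underline{\ })$ to the inductive hypothesis, the naturality square for $\widehat{d}$ at the morphism $\bigoplus f$, and the square obtained by applying the additive bifunctor $\bigoplus_{j=0}^{2^n-1}$ to the $n=1$ base case. Alternatively, because every morphism in the diagram is built canonically from the structural data of a strongly distributive category together with bifunctorial action on $f$, Laplaza's coherence theorem gives commutativity directly.

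The main obstacle I anticipate is bookkeeping for $\widehat{d}$: one must justify its existence via a secondary induction on the number of summands (applying one $dl$ at a time), and must be careful that the resulting identification with $\bigoplus_{j=0}^{2^{n+1}-1} X$ is unambiguous, so that the word \emph{canonical} in the statement is warranted. Since all diagrams involved lie within the scope of Laplaza coherence, no genuine computational difficulty arises --- any two composites of structural isomorphisms between the same objects agree --- but the indexing must be set up carefully so that the inductive statement reassembles correctly at stage $n+1$.
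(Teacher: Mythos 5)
Your proposal is correct and follows the same broad strategy as the paper (induction on $n$, peeling off one factor of $\mathbb 2$ at a time, base case $\lambda^{(1)}_X = dr_{I,I,X}$, commutativity by naturality), but the inductive step is decomposed differently, and the paper's choice is worth noting because it dissolves the obstacle you flag. The paper defines $\lambda^{(n)}_X = dr_{I,I,A}\circ(1_{\mathbb 2}\otimes\lambda^{(n-1)}_X)$ where $A=\bigoplus_{j=0}^{2^{n-1}-1}X$ is treated as a \emph{single} object: one application of the right-distributivity $\mathbb 2\otimes A\cong A\oplus A$ finishes the step, the square commutes by one instance of naturality of $dr_{I,I,-}$ at $\bigoplus f$ pasted onto the inductive hypothesis, and no iterated distributivity or secondary induction is needed. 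Your version instead distributes $\mathbb 2$ across the $2^n$ summands via iterated $dl$ and then applies $\lambda^{(1)}_X$ summand-wise; this works, but it is exactly what forces the bookkeeping you anticipate, and it produces an interleaved rather than block ordering of the summands (immaterial here only because every summand is the same object $X$, so $\bigoplus f$ is insensitive to the permutation). One small caution: your fallback appeal to Laplaza coherence as ``any two composites of structural isomorphisms between the same objects agree'' overstates the theorem --- Laplaza's result carefully restricts which diagrams are guaranteed to commute (the issue of repeated objects and the monomorphism hypotheses is precisely why the statement is delicate) --- so you should rely on the explicit naturality-pasting argument, which is sound as you describe it.
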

\begin{proof}
We give the canonical isomorphisms $\lambda_X^{(n)}:\mathbb 2^{\otimes n}\otimes X \rightarrow \bigoplus_{j=0}^{2^n-1} X$ by induction: we  take $\lambda_X^{(1)}=dr_{I,I,X} : \mathbb 2 \otimes  X \rightarrow X\oplus X$, and 
\[ \lambda_X^{(n)} =dr_{I,I,\oplus_{j=0}^{2^{n-1}-1}X}\left(1_{\mathbb 2}\otimes \lambda^{(n-1)}_X\right) .\] 
The above diagram then commutes by naturality. 
\qed
\end{proof}

We now demonstrate that $\delta : \C \rightarrow \C$ is a (weak) {\em monoidal} endofunctor, for the additive, but not multiplicative, structure. 


\begin{proposition} The functor $\delta =(\mathbb 2 \otimes \underline{\ \ } ) : \C \rightarrow \C$ does not preserve the multiplicative monoidal structure, even up to isomorphism; however the additive structure is preserved up to a simple distributivity isomorphism. 
\end{proposition}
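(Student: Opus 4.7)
The plan is to split the argument into the two cases, handling preservation of $\oplus$ first and non-preservation of $\otimes$ second.

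For the additive case, I would exhibit the required structure morphisms directly from the left distributivity isomorphisms. For each pair $X, Y \in Ob(\C)$, the morphism
\[ dl_{\mathbb{2}, X, Y}: \mathbb{2} \otimes (X \oplus Y) \longrightarrow (\mathbb{2} \otimes X) \oplus (\mathbb{2} \otimes Y) \]
is precisely a map $\delta(X \oplus Y) \rightarrow \delta(X) \oplus \delta(Y)$. Naturality of $dl$ in its second and third arguments shows these morphisms assemble into a natural transformation, and the required (weak) monoidal coherence axioms reduce to instances of Laplaza's coherence theorem \cite{ML1,ML2}, so no further computation is needed. In the strongly distributive case $dl$ is invertible, giving $\delta$ the structure of a strong monoidal endofunctor with respect to $\oplus$.

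For the multiplicative case, the approach is a size-counting argument. The objects to compare are
\[ \delta(X \otimes Y) = \mathbb{2} \otimes (X \otimes Y) \quad \text{and} \quad \delta(X) \otimes \delta(Y) = (\mathbb{2} \otimes X) \otimes (\mathbb{2} \otimes Y). \]
Treating $\otimes$ as strict via MacLane coherence, the right-hand side rearranges to $(\mathbb{2} \otimes \mathbb{2}) \otimes (X \otimes Y)$. By Lemma \ref{invexample}(1), $\mathbb{2} \otimes \mathbb{2} \cong \mathbb{2} \oplus \mathbb{2} \cong I \oplus I \oplus I \oplus I$, which is visibly larger than $\mathbb{2} = I \oplus I$. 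Hence the two candidates differ by an additional factor of $\mathbb{2}$, and no distributivity rearrangement will collapse them in the generic case.

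The main obstacle is that, at the purely abstract level, ``differing by a factor of $\mathbb{2}$'' does not automatically forbid an isomorphism --- in a degenerate category all objects might collapse. To convert the size comparison into a rigorous non-isomorphism I would invoke any concrete non-trivial strongly distributive category as a witness; finite sets with disjoint union and Cartesian product suffices, since there $|\mathbb{2} \otimes X \otimes Y| = 2|X||Y|$ while $|(\mathbb{2} \otimes \mathbb{2}) \otimes X \otimes Y| = 4|X||Y|$, ruling out any isomorphism for $X, Y$ nonempty. This shows $\delta$ cannot support a monoidal structure for $\otimes$ even up to isomorphism, while leaving the additive strong monoidal structure of the first paragraph intact.
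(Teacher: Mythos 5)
Your proof takes essentially the same route as the paper: the additive structure maps are the components $dl_{\mathbb{2},X,Y}$ with coherence delegated to Laplaza, and the multiplicative failure rests on the observation that $\delta(X)\otimes\delta(Y)$ rearranges to $(\mathbb{2}\otimes\mathbb{2})\otimes X\otimes Y$, i.e.\ to $\delta^2(X\otimes Y)$ up to symmetry, which carries an extra factor of $\mathbb{2}$ compared with $\delta(X\otimes Y)$. The only place you go beyond the paper is in closing the loophole that a degenerate category might still identify these objects --- the paper merely notes the failure holds unless $\delta(X)\cong\delta^2(X)$ for all $X$, whereas you supply finite sets with product and disjoint union as a concrete witness; that is a legitimate and slightly stronger finish, not a different method.
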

\begin{proof}
To see that $\delta$ does not preserve the multiplicative tensor, observe that note that
\[ \delta(A)\otimes \delta(B) \ = \ (1_\mathbb 2 \otimes \sigma_{A,\mathbb 2}\otimes 1_B) \delta^2(A\otimes B) \]
Thus, unless $\delta(X)\cong \delta^2(X)$ for all $X\in Ob(\C)$, the copying functor does not preserve the multiplicative tensor, even up to isomorphism.

However, $\delta(0)\cong 0$, and the following diagram also commutes:
\[ \xymatrix{
\delta(A\oplus B) 	\ar@{-}[rr]^\cong  \ar@{-}[d]_{=}  &  &\delta(A) \oplus \delta(B) \ar@{-}[d]^{=}  \\
\mathbb 2 \otimes (A\oplus B) \ar[rr]_{dl_{(I\oplus I),A,B}}  & & \mathbb 2 \otimes A \oplus \mathbb 2 \otimes B \\
}
\]
Since the required isomorphisms are canonical coherence isomorphisms in both cases, $\delta:(\C,\oplus )\rightarrow (\C,\oplus )$ is a (weak) monoidal functor.  \qed \end{proof}

\subsection{Copying and the Iterator}\label{bang}
We now study an operation on endomorphism monoids closely related to the copying functor $\delta:(\C,\oplus) \rightarrow (\C,\oplus)$. 

\begin{definition}
Let $(\C,\otimes ,\oplus)$ be a strongly distributive category. For all $f\in \C(A,A)$, we define the {\bf $\bf N^{th}$ iterator of $f$} to be
\[ !^N(f) \ = \ \bigoplus_{j=0}^{N-1}f^j \ \in \ \C \left( A^{\oplus N} , A^{\oplus N}\right) \]
\end{definition}

We will give an efficient factorisation of $!^{2^n}(f)$. This will rely on the following interaction of the functor $\delta=(\mathbb 2 \otimes \underline{\ \ }):\C\rightarrow \C$, and the  multiplicative and additive symmetries $\sigma_{X,Y}:X\otimes Y \rightarrow Y\otimes X$ and $s_{A,B}:A\oplus B \rightarrow B \oplus A$. 

\begin{lemma}\label{deltasym}
Let $A,B,C$ be objects of a strongly distributive category $(\C,\otimes ,\oplus)$. Then the following diagram commutes:\\

\scalebox{0.8}{
\xymatrix {
																					&	\mathbb 2 \otimes A \otimes (B\oplus C) \ar[dl]|{\sigma_{\mathbb 2 ,A} \otimes 1_{B\oplus C}} 	\ar[dr]|{1_{\mathbb 2} \otimes dl_{A,B,C}} 	&				\\
A\otimes \mathbb 2 \otimes (B\oplus C) 	\ar[d]|{1_A\otimes dl_{\mathbb 2 , B, C}}						&																											& \mathbb 2 \otimes (A\otimes B\oplus A \otimes C) \ar[d]|{dr_{\mathbb 2,A\otimes B,A\otimes C}} \\
A\otimes (\mathbb 2 \otimes B \oplus \mathbb 2 \otimes C) \ar[d]|{1_A\otimes (dr_{I,I,B}\oplus dr_{I,I,B})}		& 																											& A\otimes B \oplus A \otimes C \oplus A \otimes B \oplus A \otimes C  \ar[d]|{1_{A\otimes B}\oplus s_{A\otimes C,A\otimes B}\oplus 1_{A\otimes C}}  \\
A\otimes (B\oplus B \oplus C \oplus C ) \ar[rr]|{dl_{A,B\oplus B ,C\oplus C}}							&																											& A\otimes B \oplus A \otimes B \oplus A \otimes C \oplus A \otimes C \\
}
}
\end{lemma}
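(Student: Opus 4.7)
The plan is to reduce the commutativity of this diagram to Laplaza's coherence theorem for distributivity, supplemented by naturality of the structural isomorphisms. Every arrow in the diagram is either a distributivity morphism ($dl$ or $dr$), a multiplicative or additive symmetry ($\sigma$ or $s$), or the tensor/sum of one of these with identity morphisms. Because $\C$ is strongly distributive, all such morphisms are natural isomorphisms, and by the conventions of the paper I may treat additive and multiplicative associativities and unit arrows as strict.

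I would then interpret both paths as particular rewriting sequences that transform the source $\mathbb{2} \otimes A \otimes (B\oplus C)$ into the fully distributed expression $A\otimes B \oplus A\otimes B \oplus A\otimes C \oplus A\otimes C$. Expanding $\mathbb{2} = I\oplus I$, the left/lower path first swaps $\mathbb{2}$ past $A$ multiplicatively, distributes $\mathbb{2}$ into $(B\oplus C)$, then splits each $\mathbb{2}\otimes B$ and $\mathbb{2}\otimes C$ using $dr_{I,I,-}$, and finally distributes $A$; the right/upper path distributes $A$ into $(B\oplus C)$ first and then distributes $\mathbb{2}$ over the resulting sum, with the additive symmetry $s_{A\otimes C,A\otimes B}$ correcting the order of summands. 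A careful trace shows that both composites realise the same symbolic permutation of the four resulting copies of $A\otimes B$ and $A\otimes C$; once this is verified, Laplaza's coherence theorem forces the two composites to be equal.

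The main obstacle I anticipate is precisely this bookkeeping step: one must check that the corrective additive symmetry $s_{A\otimes C,A\otimes B}$ on the right-hand path exactly compensates for the multiplicative swap $\sigma_{\mathbb{2},A}$ on the left-hand path, so that both composites induce the same permutation on atomic factors. Once this is confirmed, the lemma is a direct instance of Laplaza coherence. As a backup, I could prove the diagram directly by pasting it from four elementary sub-diagrams: a naturality square for $dl_{A,-,-}$ applied to the morphism $dr_{I,I,B}\oplus dr_{I,I,C}$; a naturality square for $dr_{I,I,-}$ applied to $dl_{A,B,C}$; a hexagon from \cite{ML1,ML2} relating $\sigma_{\mathbb{2},A}$ to the two distributivities; and a symmetry-naturality square for $s$. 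Each of these is one of the diagrams listed in Laplaza's reduced coherence family, so a pasting argument provides an explicit, coherence-free verification if needed.
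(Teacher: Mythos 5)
Your proposal takes essentially the same route as the paper, which simply observes that the diagram commutes by Laplaza's coherence theorems for distributivity (with associativity isomorphisms elided). Your additional care in checking that both composites induce the same permutation of the atomic factors --- so that the diagram genuinely falls within the scope of Laplaza's (restricted) coherence result --- and your backup pasting argument from naturality squares are more detail than the paper supplies, but they do not constitute a different approach.
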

\begin{proof}The commutativity of this diagram follows immediately from the coherence theorems of \cite{ML1,ML2} (note that we have elided associativity isomorphisms, for clarity).  
\qed \end{proof}

\begin{theorem}\label{maintheorem}
For arbitrary $n\geq 1$ and $f\in \C(X,X)$, the arrow $!^{2^{n+1}}(f)$ can be defined in terms of $!^{2^n}(f)$, the functor $(\mathbb 2 \otimes \underline{\ \ })$, and canonical isomorphisms, with the exact relationship expressed by the commutativity of the following diagram:\\

\scalebox{0.7}{
\xymatrix
 {
 	\mathbb 2 \otimes \bigoplus_{j=1}^{2^{n-1}} X 	\ar[dr]|{1_{\mathbb 2}\otimes dl^{-1}}								&	& 		& 	& \mathbb 2\otimes  \bigoplus_{j=1}^{2^{n-1}} X 	\ar[llll]_{1_{\mathbb 2}\otimes !^{2^{n-1}}(f)}						\\
	 																							& \mathbb 2 \otimes \mathbb 2^{\otimes (n-1)} \otimes X \ar[d]|{\sigma_{\mathbb 2,\mathbb 2^{\otimes (n-1)}}}	& 		&	 \bigoplus_{j=1}^{2^n} X	\ar[ur]|{dl^{-1}_{\mathbb 2 ,\underline{\ \ }}} \ar[d]|{!^{2^n}(f)}&	\\
																								& \mathbb 2^{\otimes (n-1)}\otimes \mathbb 2  \otimes X \ar[dl]|{1_{\mathbb 2^{\otimes (n-1)}}\otimes dr_{I,I,X}}		& 		&	 \bigoplus_{j=1}^{2^n} X	&		\\			
	\mathbb 2^{\otimes (n-1)}\otimes (X\oplus X)	\ar[rrrr]_{1_{\mathbb 2^{\otimes(n-1)}}\otimes \left(1_X\oplus f^{2^{n-1}}\right)}	&	& 		& 	& \mathbb 2^{\otimes (n-1)}\otimes (X\oplus X)		\ar[ul]|{dl_{\mathbb 2 ,\underline{\ \ }}}	 	\\		
	}
}

\end{theorem}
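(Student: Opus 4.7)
The plan is to verify that the long outer composite around the diagram equals the concrete arrow $!^{2^n}(f) = \bigoplus_{j=0}^{2^n-1} f^j$. The key combinatorial observation is the factorisation obtained by splitting each index $j \in \{0,\ldots,2^n-1\}$ by its most significant bit as $j = \epsilon\cdot 2^{n-1} + k$, with $\epsilon \in \{0,1\}$ and $k \in \{0,\ldots,2^{n-1}-1\}$. Since $f^j = f^{\epsilon\cdot 2^{n-1}} \circ f^k$, bifunctoriality of $\oplus$ yields
\[ !^{2^n}(f) \ = \ \bigl[\,1_X^{\oplus 2^{n-1}} \oplus (f^{2^{n-1}})^{\oplus 2^{n-1}}\,\bigr] \ \circ \ \bigl[\,!^{2^{n-1}}(f) \oplus !^{2^{n-1}}(f)\,\bigr]. \]

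I would then identify the two halves of the outer path with these two factors. For the upper half, conjugating $!^{2^{n-1}}(f) \oplus !^{2^{n-1}}(f)$ by the canonical iso $dl^{-1}:\bigoplus_{j=0}^{2^n-1}X \to \mathbb 2 \otimes \bigoplus_{k=0}^{2^{n-1}-1}X$ produces, by bifunctoriality and the definition of $\delta$, precisely $1_{\mathbb 2}\otimes !^{2^{n-1}}(f)$, which is the top arrow. For the lower half, a further application of the canonical iso $X^{\oplus 2^{n-1}} \cong \mathbb 2^{\otimes(n-1)}\otimes X$ (the $\lambda^{(n-1)}$-iso of the earlier corollary) transports $(f^{2^{n-1}})^{\oplus 2^{n-1}}$ to $1_{\mathbb 2^{\otimes(n-1)}}\otimes f^{2^{n-1}}$. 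Hence the first factor becomes a map that is the identity on $2^{n-1}$ copies and $1_{\mathbb 2^{\otimes(n-1)}}\otimes f^{2^{n-1}}$ on the other $2^{n-1}$ copies, grouped by the MSB. The bottom arrow $1_{\mathbb 2^{\otimes(n-1)}}\otimes (1_X \oplus f^{2^{n-1}})$ is the ``interleaved'' form of the same data: for each of the $2^{n-1}$ fibres indexed by $\mathbb 2^{\otimes(n-1)}$, one puts $1_X$ opposite $f^{2^{n-1}}$. The symmetry $\sigma_{\mathbb 2,\mathbb 2^{\otimes(n-1)}}$ in the middle of the diagram is precisely the bit-reversal permutation translating between the MSB-grouped and the interleaved orderings.

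The main obstacle is the bookkeeping that verifies the chain of distributivity and symmetry isomorphisms $(1_{\mathbb 2}\otimes dl^{-1})$, $\sigma_{\mathbb 2,\mathbb 2^{\otimes(n-1)}}$ and $(1_{\mathbb 2^{\otimes(n-1)}}\otimes dr_{I,I,X})$ assembled in the middle of the diagram really does implement this translation. This splits into two ingredients: first, the purely canonical sub-diagram built only from $dl$, $dr$, $\sigma$ and their inverses commutes by Laplaza's coherence theorem \cite{ML1,ML2}, with Lemma~\ref{deltasym} providing the prototypical instance (a four-block permutation of the form $1\oplus s\oplus 1$ induced by $\sigma_{\mathbb 2,A}$); second, naturality of $dl$ and $dr$ in their non-distinguished arguments slides the one non-canonical arrow $(1_X\oplus f^{2^{n-1}})$ across these isomorphisms into the position dictated by the MSB-factorisation. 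Combining the two identifications via composition and bifunctoriality of $\oplus$ reconstitutes the displayed decomposition of $!^{2^n}(f)$, establishing commutativity of the diagram.
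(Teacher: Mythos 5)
Your proof is correct and follows essentially the same route as the paper: the commutativity is reduced to (a) the purely canonical sub-diagram of distributivities and the symmetry $\sigma_{\mathbb 2,\mathbb 2^{\otimes(n-1)}}$, which commutes by Laplaza coherence exactly as in Lemma~\ref{deltasym}, and (b) naturality sliding the non-canonical arrow $(1_X\oplus f^{2^{n-1}})$ across those isomorphisms, which is the content of Figure~\ref{keynatresult}. Your explicit most-significant-bit factorisation $!^{2^n}(f)=\bigl[1^{\oplus 2^{n-1}}\oplus (f^{2^{n-1}})^{\oplus 2^{n-1}}\bigr]\circ\bigl[!^{2^{n-1}}(f)\oplus !^{2^{n-1}}(f)\bigr]$ is the index bookkeeping the paper leaves implicit in its ``special case'' step (and the symmetry is a rotation of the controlling bit to the least significant position rather than a full bit-reversal, but your description of its effect is the right one).
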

\begin{figure}[]
\caption{A technical result implied by naturality}\label{keynatresult}
\scalebox{0.7}{
\xymatrix
 {
		\mathbb 2 \otimes A \otimes (B \oplus C)\ar[rrrr]^{1_{\mathbb 2} \otimes 1_A \otimes (f\oplus g)} \ar[dd]|{\sigma_{\mathbb 2,A}\otimes 1_{B\oplus C}} 	& 	&&	& \mathbb 2 \otimes A \otimes (Y\oplus Z) \ar[dd]|{\sigma_{\mathbb 2,A}\otimes 1_{Y\oplus Z}} \\
																															&	&&	&							\\
		A\otimes \mathbb 2 \otimes (B\oplus C)\ar[rrrr]^{1_A\otimes 1_{\mathbb 2} \otimes (f\oplus g)}	\ar[dd]|{1_A\otimes dl_{\mathbb 2 ,B,C}}				&	&&	&	A\otimes \mathbb 2 \otimes (Y\oplus Z) \ar[dd]|{1_A\otimes dl_{\mathbb 2 ,Y,Z}}	 \\
																															&	&&	&						\\		
		A\otimes (\mathbb 2 \otimes B \oplus \mathbb 2 \oplus C) \ar[dd]|{1_A\otimes (dr_{I,I,B}\oplus dr_{I,I,C})}	\ar[rrrr]^{1_A\otimes (1_{\mathbb 2}\otimes f \oplus 1_{\mathbb 2}\otimes g)}					&	&&	&	A\otimes (\mathbb 2 \otimes Y \oplus \mathbb 2 \otimes Z) \ar[dd]|{1_A\otimes (dr_{I,I,Y}\oplus dr_{I,I,Z})}		\\
																															&	&&	&						\\		
A\otimes (B \oplus B \oplus C \oplus C)\ar[dd]|{dl_{A,B\oplus B,C\oplus C}} \ar[rrrr]^{1_A\otimes (f\oplus f \oplus g \oplus g)}								&	&&	&	A\otimes (Y \oplus Y \oplus Z \oplus Z)\ar[dd]|{dl_{A,Y\oplus Y,Z\oplus Z}}		\\
																															&	&&	&						\\		
A\otimes B \oplus A \otimes B \oplus A \otimes C \oplus A \otimes C 	\ar[rrrr]^{1_A\otimes f \oplus 1_A\otimes f \oplus 1_A\otimes g \oplus 1_A \otimes g}			&	&&	&	A\otimes Y \oplus A \otimes Y \oplus A \otimes Z \oplus A \otimes Z	\\ 
}
}
\end{figure}
\begin{proof}
Consider the left hand path in the coherent diagram of Lemma \ref{deltasym} above, from $\mathbb 2 \otimes A \otimes (B\oplus C)$ to $A\otimes B \oplus A \otimes B \oplus A \otimes C \oplus A \otimes C$, along with arrows $f\in \C(B,Y)$ and $g\in \C(C,Z)$. Then naturality of canonical coherence isomorphisms implies the commutativity of the diagram in Figure \ref{keynatresult}.
The required result is then the special case where $X=Y$ and $A=\mathbb 2^{\otimes n}$.
\qed \end{proof}

\subsection{String Diagrams for Categories with Distributivity}\label{strings}
Results such as Theorem \ref{maintheorem} above may be given as string diagrams, using the conventions formalised in \cite{GTC1,GTC2}.  When we have two distinct monoidal tensors, we adopt various conventions to ensure that such a diagrammatic reasoning is still valid:
\begin{enumerate}
\item Lines are separated by an implicit  {\em multiplicative} rather than an {\em additive} monoidal tensor.
\item Operations involving {\em additive} monoidal tensors are enclosed in a double box.
\item Entering / leaving a double box requires an (implicit) distributivity isomorphism / its inverse. 
Provided care is taken with labelling of objects, the required canonical isomorphism may be deduced from the type of the operation.
\end{enumerate}
We may then use diagrammatic manipulations on either the diagram as a whole (treating each additive box as a single operation), or on the contents of an individual double box (treating it as an entire diagram in of itself). These conventions ensure that the diagrammatic manipulations of \cite{GTC1,GTC2} are valid, simply by restricting the permitted manipulations.

Using the above, an illustration of Theorem \ref{maintheorem} is given in Figure \ref{secondfigure}.

\begin{corollary}\label{efficientconstr}
There exists an efficient construction of $!^{2^n}(f)$ in $O(n)$ steps, based on canonical coherence isomorphisms.
\end{corollary}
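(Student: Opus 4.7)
My plan is to run an induction on $n$ in parallel with a repeated-squaring schedule for the powers of $f$, counting as one elementary step each composition in $\C$, each application of the copying functor $\delta = (\mathbb 2 \otimes \underline{\ \ })$, and each canonical coherence isomorphism supplied by the strongly distributive structure.

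I would strengthen the induction hypothesis so as to carry both the arrow $!^{2^n}(f)$ \emph{and} the power $f^{2^{n-1}}$, claiming that the pair can be assembled jointly in $O(n)$ steps. The base case $n=1$ is immediate: $!^{2}(f) = 1_X \oplus f$ and $f^{2^0} = f$ are each built from $f$ in a bounded number of steps.

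For the inductive step I would feed the already-constructed $!^{2^n}(f)$ into the commutative diagram of Theorem \ref{maintheorem}, which exhibits $!^{2^{n+1}}(f)$ as a composite of $!^{2^n}(f)$, a bounded number of canonical isomorphisms (symmetries, distributors, and their inverses), one application of the copying functor, and the single extra arrow $1_X \oplus f^{2^{n-1}}$ tensored with identities. Since $f^{2^{n-1}}$ is already on hand by the strengthened hypothesis, this stage contributes only $O(1)$ new pieces. A further single composition $f^{2^n} = f^{2^{n-1}} \circ f^{2^{n-1}}$ restores the invariant for the next round, yielding the recurrence $T(n+1) \leq T(n) + C$ and hence $T(n) = O(n)$.

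The one place where the argument could silently blow up is the appearance of $f^{2^{n-1}}$ inside the diagram, which read naively would cost $2^{n-1}-1$ compositions and destroy the bound. Repeated squaring is the device that averts this, which is precisely why I would carry the relevant power along as part of the induction hypothesis rather than compute it on demand. I expect this bookkeeping to be the only delicate point; everything else is a mechanical translation of Theorem \ref{maintheorem} into a step count.
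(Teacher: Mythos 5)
Your proof is correct and follows essentially the same route as the paper, which simply iterates the construction of Theorem \ref{maintheorem} $n$ times. The one refinement you add --- carrying $f^{2^{n-1}}$ along by repeated squaring so that the arrow $1_X \oplus f^{2^{n-1}}$ contributes only $O(1)$ new material per stage --- is a genuine subtlety that the paper elides in the corollary itself and instead addresses in Section \ref{blackboxes}, where the powers $U^{2^k}$ are treated as black boxes whose efficient construction is deferred; your bookkeeping makes the $O(n)$ claim self-contained.
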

\begin{proof}
This follows by iterating the construction of Theorem \ref{maintheorem}. A diagrammatic illustration is given in Figure \ref{Shorsdiagram}.
\qed \end{proof}

\begin{figure}
 \caption{A `string diagram' illustration of Theorem \ref{maintheorem}}\label{secondfigure}
\begin{center}

\includegraphics{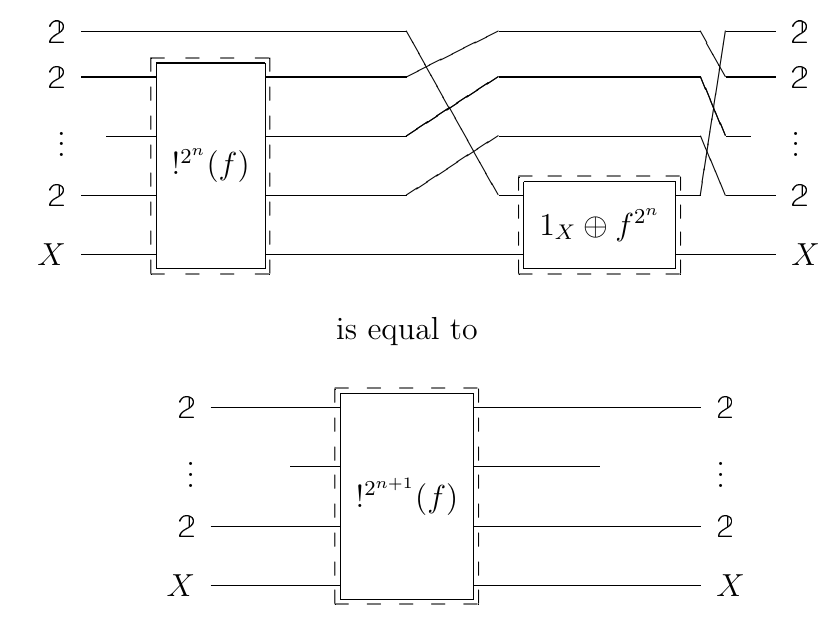}
\end{center}
\end{figure}

 \begin{figure}[]
 \caption{The efficient construction of $!^{16}(f)= 1_X\oplus f \oplus \ldots \oplus f^{15}\oplus f^{16}$}\label{Shorsdiagram}
\begin{center}
\includegraphics{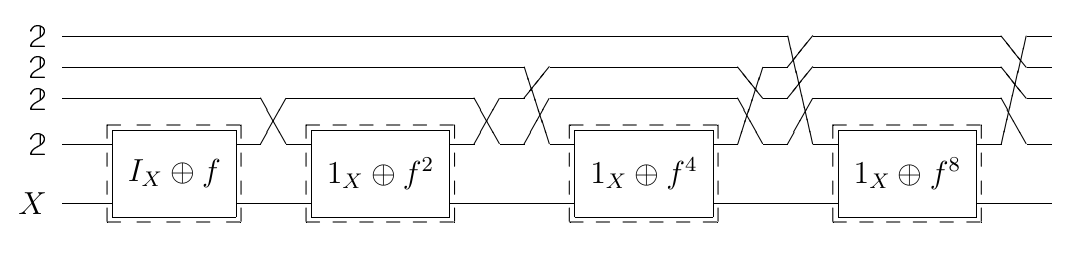}
\end{center}
\end{figure}

\section{Concrete Realisation in Hilbert Space}\label{real}
{\em The following sections assume a small degree of  familiarity with quantum circuits and Hilbert spaces. More details may be found in in \cite{NC} or any other text on quantum computing and information.}

We now consider the constructions of the previous section 
in the concrete setting of finite-dimensional complex Hilbert spaces. The two monoidal tensors are the familiar tensor product and direct sum --- the distributivity  isomorphisms relating the these are well-established. 

The translation of the basic concepts is given in Table \ref{concretehilb}. A subtlety of this is the interpretation of the distinguished object $\mathbb 2$. The direct sum of two $1$-dimensional spaces is of course a two-dimensional space. However, since $\mathcal Q$ is built up in this way, we should think of it as having a fixed orthonormal basis specified by the canonical inclusions\footnote{This is, of course, related to the  `classical structures' of \cite{CP} --- these are a special form of Frobenius algebra that play the role of orthonormal bases in categorical quantum mechanics. They are based on a `copying' operation; the connection between these, and the $\mathbb 2 \otimes \underline{\ \ }$ copying functor of Definition \ref{times2}, is straightforward.} -- this will allow us to use matrix representations for arrows in this category.

\begin{table}[t]
\caption{Translating abstract theory into a concrete setting}
\label{concretehilb}

\begin{center}
\begin{tabular}{|c|c|}
\hline
{\bf Abstract category $\C$} & {\bf Concrete category $\bf Hilb_{FD}$} \\
\hline \hline 
{\bf Multiplicative tensor} & {\bf Tensor product} \\
$H = H_1\otimes H_2$		& {\em (Treating two systems as a }  \\
				& {\em single compound system).} \\
				&						\\
{\bf Additive tensor} & {\bf Direct sum} \\
	$(U\oplus 1)$		& {\em ($U$ controlled on $\ket{0}$)}\\
	$(1\oplus V)$		& {\em ($V$ controlled on $\ket{1}$)}\\
					& 							\\
{\bf Multiplicative unit} $I$ & {\bf Complex plane} $\mathbb C$ \\
					&						\\
{\bf Additive unit} $0$ & {\bf Zero-dim. space} $\{ 0 \}$ \\
					&					\\	
{\bf Distinguished object} $\mathbb 2 = I \oplus I$ & {\bf Qubit space} $\mathcal Q$,  with \\
 								&  \hspace{1em} orthonormal basis $\{ \ket{0},\ket{1}\}$  \\
\hline
\end{tabular}
\end{center}

\end{table}


\begin{remark} A key point of this paper is that the structures required for the central oracle of Shor's algorithm  are {\em not} dependent on the machinery of either traditional quantum mechanics (such as a matrix calculus, or notions of linearity and convergence), or categorical reinterpretations (compact closure, biproduct structures, \&c.).  However, the existence of matrix representations certainly makes the concrete instantiation simpler, as the following sections will demonstrate.
\end{remark}

\subsection{Interpreting the Direct Sum in the Circuit Model}\label{oplusinterp}
In the translation from an abstract to a concrete setting provided in Table \ref{concretehilb}, the interpretation of the tensor, the multiplicative unit, and the distinguished object $\mathbb 2$ are standard. Furthermore, as we are forced by the category theory to specify an orthonormal basis for the two-dimensional qubit space, we are now, for all practical purposes, working within the quantum circuit paradigm. The final connection arises from the interpretation of the direct sum in terms of `quantum conditionals', or `controlled operations' \cite{TCS2}.

\begin{definition}
Let $U,V$ be unitary operations on a finite-dimensional Hilbert space $H$. The {\bf controlled operations} $Ctrl_0U$ and $Ctrl_1V$ 
are the operations on $\mathcal Q \otimes H$ defined by:
 \begin{center}
\begin{tabular}{ccc}
$Ctrl_0U\ket{0}\ket{\psi} =  \ket{0}U\ket{\psi}$ & and & $Ctrl_0U \ket{1}\ket{\psi} = \ket{1}\ket{\psi}$ \\
	& 	\\
$Ctrl_1V\ket{0}\ket{\psi} =  \ket{0}\ket{\psi}$ &  and & $Ctrl_1V \ket{1}\ket{\psi} = \ket{1}V\ket{\psi}$ \\
\end{tabular}
\end{center}
with  standard circuit representations shown in figure \ref{ctrlops}.
\end{definition}
Denoting the $n$-qubit identity operation by $I_n$, these operations have matrix representations given by $C_0U\ = \ \left( \begin{array}{cc} U & {\bf 0} \\ {\bf 0} & I_n \end{array}\right)$ and $ 
C_1V\ = \ \left( \begin{array}{cc} I_n & {\bf 0} \\ {\bf 0} & V \end{array}\right)$.
 \begin{figure}
 \caption{Quantum circuits for `Control on $0$' and  `Control on $1$'}\label{ctrlops}
\begin{center}
\scalebox{0.8}{
\includegraphics{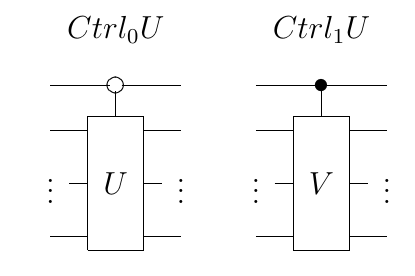}}
\end{center}
\end{figure}
The direct sum $U \oplus V$ 
is then simply the composite $U\oplus V = Ctrl_0U.Ctrl_1V = Ctrl_1V.Ctrl_0U$.
 
 
Controlled operations can themselves be controlled. Given $2^n$ unitary maps $\{ U_a\}_{a=0}^{2^n-1}$, the construction of $\bigoplus_{a=0}^{2^n-1}U_a$ is immediate via an $n$-qubit ancilla. This is illustrated in Figure \ref{multidirectsum}  for the direct sum of $2^3$ unitaries, and  the `binary counting' pattern on controls is immediate.
 \begin{figure}[]
 \caption{A circuit for the direct sum  $\bigoplus_{a=0}^{7} U_a$}\label{multidirectsum}
\begin{center}
\scalebox{0.8}{
\includegraphics{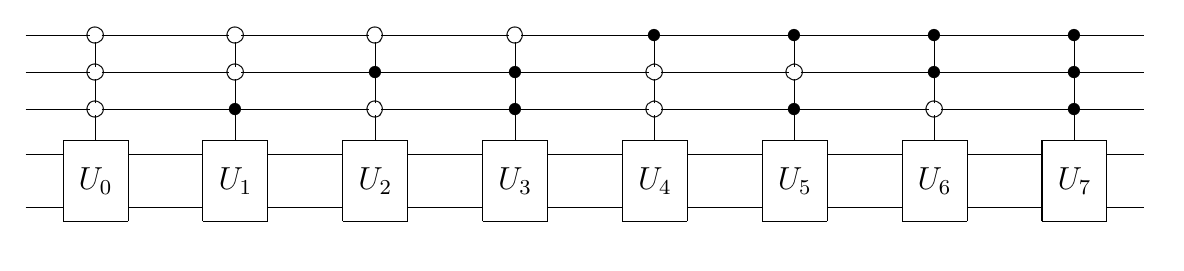}}
\end{center}
\end{figure}
  \begin{figure}[]
 \caption{Three equivalent diagrams}\label{nonadjacent}
\begin{center}
\includegraphics{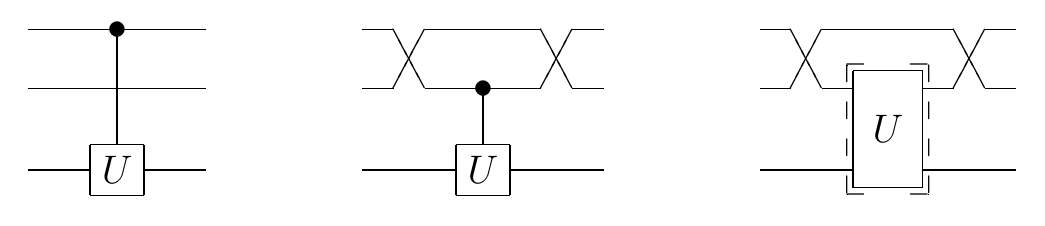}
\end{center}\end{figure}
  \begin{figure}[]
 \caption{The multiplicative symmetry via controlled additive symmetries}\label{multsym}
\begin{center}
\scalebox{0.8}{
\includegraphics{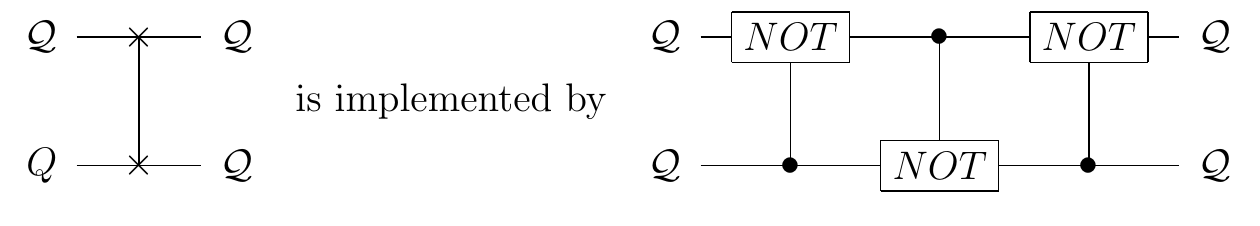}
}
\end{center}
\end{figure}

\subsection{Controlled Operations and Categorical Swap Maps}\label{notationalstuff}
In the standard quantum circuit formalism, controlled operations are not necessarily controlled by the qubit directly above them (i.e. the more significant qubit). We treat this as simply a diagrammatic convention, so a circuit  where the control qubit is not adjacent to the controlled operation is implemented using multiplicative symmetries (i.e. qubit swap maps) in the obvious way. Thus, the three circuits of Figure  \ref{nonadjacent} 
are equivalent, with the first being the usual quantum circuit notation, and the third conforming to the categorical conventions of Section \ref{strings}.

The qubit swap map (i.e. multiplicative symmetry) itself has an interesting categorical interpretation via the standard decomposition shown in Figure  \ref{multsym}. 
 The single qubit NOT gate ($NOT\ket{0}=\ket{1}$, $NOT\ket{1}=\ket{0}$) is the additive symmetry $s_{\mathbb C,\mathbb C}$ of two multiplicative unit objects. Figure \ref{multsym} 
expresses an abstract categorical identity relating the multiplicative symmetry $\sigma_{\mathbb 2,\mathbb 2}$, the additive symmetry $s_{I,I}$, and distributivity.  Details are left as an interesting exercise.

\subsection{Interpreting the Iterator 
in the Quantum Circuit Paradigm}
The interpretation of 
$!^{2^n}(U) \ = \ 1_H \oplus U \oplus U^2 \oplus \ldots \oplus U^{2^n-1}$
for some some unitary operation $U: H\rightarrow H$
is immediate; it is simply the sequence of multiply-controlled operations shown in Figure \ref{inefficientbang}. 
 \begin{figure}[]
 \caption{A circuit for $!^{2^n}(U)$}\label{inefficientbang}
\begin{center}
\scalebox{0.8}{
\includegraphics{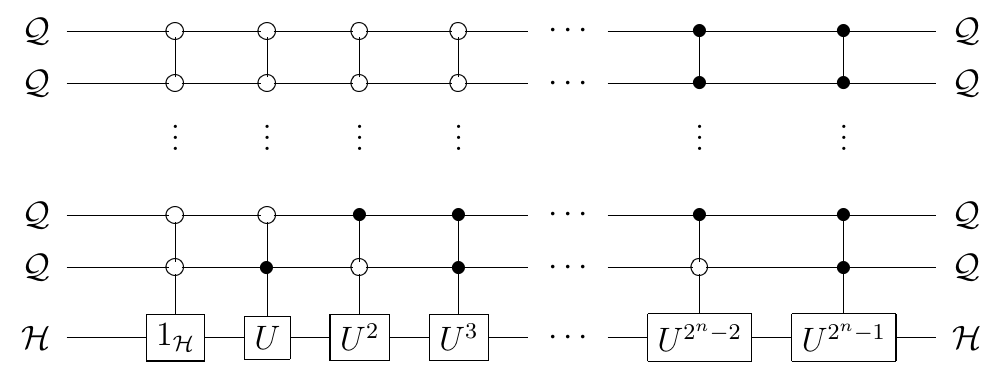}
}
\end{center}
\end{figure}
The operational interpretation is immediate:
\begin{proposition}\label{bangaction}
Given an arbitrary  quantum state $\ket{\psi}\in H$ and a computational basis ancilla state $\ket{a}$,  the circuit of Figure \ref{inefficientbang}
 acts on their tensor product as $\ket{a}\ket{\psi} \ \ \mapsto \ \ \ket{a}U^a\ket{\psi}$.
\end{proposition}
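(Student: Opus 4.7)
The plan is to unpack the interpretation of the iterator in the quantum circuit paradigm (Section \ref{oplusinterp}) and verify its action on computational basis ancillae by induction on the number of ancilla qubits. The core observation is that the circuit of Figure \ref{inefficientbang} is, by definition, a concrete realisation of the direct sum $\bigoplus_{a=0}^{2^n-1} U^a$, and the desired equation is nothing more than the operational description of how a direct sum acts when the ancilla is in a basis state.

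First I would establish the base case $n=1$ directly from the definitions in Section \ref{oplusinterp}: writing $U^0 \oplus U^1 = Ctrl_0\, 1_H \cdot Ctrl_1\, U$, one reads off from the action of $Ctrl_0$ and $Ctrl_1$ that $\ket{0}\ket{\psi}\mapsto \ket{0}\ket{\psi} = \ket{0}U^0\ket{\psi}$ and $\ket{1}\ket{\psi}\mapsto \ket{1}U\ket{\psi}=\ket{1}U^1\ket{\psi}$. For the inductive step, I would observe that the circuit built via the $n$-qubit ancilla with the ``binary counting'' pattern of controls (as in Figure \ref{multidirectsum}) decomposes as the sequential composite of $2^n$ multiply-controlled operations, where the $a$-th operation fires precisely when the ancilla register matches the binary expansion $a = a_{n-1}\ldots a_0$. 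Since distinct basis states trigger disjoint components of the direct sum, only the $a$-th controlled operation acts nontrivially on $\ket{a}\ket{\psi}$, giving $\ket{a}U_a\ket{\psi}$.

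Specialising to $U_a = U^a$ yields $\ket{a}\ket{\psi}\mapsto \ket{a}U^a\ket{\psi}$, which is the claim. One could alternatively avoid induction by appealing directly to the block-diagonal matrix representation of $\bigoplus_{a=0}^{2^n-1}U^a$, whose $a$-th diagonal block is $U^a$ and which, in the standard basis ordered by $\ket{a}\otimes \ket{\psi}$, evidently sends the $a$-th block of coefficients through $U^a$.

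The main obstacle is essentially bookkeeping rather than mathematical substance: one must verify that the labelling convention for the direct sum components in $!^{2^n}(U) = \bigoplus_{a=0}^{2^n-1} U^a$ agrees with the binary counting order of the controls in Figure \ref{inefficientbang}, so that the $a$-th summand really is $U^a$ (rather than, say, $U^{2^n-1-a}$ or some bit-reversed permutation). Once the convention is fixed consistently with Section \ref{oplusinterp} and Figure \ref{multidirectsum}, the proof is a one-line consequence of the controlled-operation semantics.
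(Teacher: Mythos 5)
Your proposal is correct and follows the same route as the paper, which simply observes that the claim follows by definition of the action of controlled operations in the quantum circuit model. Your version merely spells out the induction and the binary-counting bookkeeping that the paper leaves implicit.
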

\begin{proof}
This follows by definition of the action of controlled operations in the quantum circuit model.
\qed
\end{proof}

\subsection{Applications of the $!^N(U)$ Operation in Quantum Programming}\label{periodfinding}
Quantum circuits acting as $\ket{a}\ket{\psi} \ \mapsto \ \ket{a}U^a\ket{\psi}$ have an important role to play in quantum period-finding algorithms (although, of course, the precise circuit of Figure 9 
is {\em not} used). 
The best-known period-finding algorithm is, of course, Shor's factorisation algorithm, based on period-finding for modular exponential functions. 

Period-finding algorithms rely on a central oracle that acts classically on some subset of the computational basis 
 (we refer to \cite{NC} for a formal definition, and \cite{TCS2} for a categorical interpretation in terms of Barr's $l_2$ functor \cite{MB}).  
Given a classical reversible function $f$, they require a unitary that acts as $\ket{a}\ket{x} \ \ \mapsto \ \ \ket{a}\ket{f^a(x)}$.
Given an oracle $U_f$ for the classical computation $f$, we may instead write this as $\ket{a}\ket{x} \ \ \mapsto \ \ \ket{a}U_f^a\ket{x}$
and observe that the required oracle for quantum period-finding is in fact $!^N(U_f)$, for some suitably large integer $N=2^n$. 

The complete quantum period-finding algorithm (up to some straightforward classical pre- and post- processing) is then simply given by conjugating such an oracle by a quantum Fourier transform, applied to the first register only.  For example, in Shor's algorithm the central oracle is required to implement modular exponentials, via the action $\ket{x}\ket{1} \ \ \mapsto \ \ \ket{x} \ket{r^x\ (mod\ K)}$
and hence, by linearity,
\[ \left( \sum_{x=0}^N \ket{x} \right) \ket{1} \ \ \mapsto \ \ \sum_{x=0}^N \ket{x} \ket{r^x\ (mod\ K)} \]
Given a (readily constructed) quantum oracle $U$ that acts on the computational basis  as $U\ket{p} \ = \ \ket{rp\ (mod\ K)}$, then, by Proposition \ref{bangaction}, the required oracle for Shor's algorithm  is $!^N(U)$. Thus the (quantum part of) Shor's algorithm is as shown in Figure \ref{bangshor}.


 \begin{figure}[]
 \caption{The quantum circuit in Shor's algorithm}\label{bangshor}
\begin{center}
\scalebox{0.8}{
\includegraphics{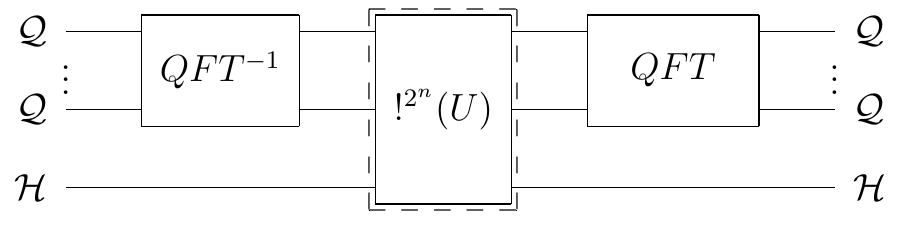}
}
\end{center}\end{figure}

\section{An Efficient Circuit for the $!^N(U)$ Operation}
The utility of the $!^N(U)$ operation in any period-finding algorithm must rely on an efficient implementation. Implementing the the central oracle using the circuit of Figure \ref{inefficientbang} would be pointless, 
given the complexity of constructing such a circuit. Instead, quantum algorithms (in particular, Shor's algorithm) use an exponentially more efficient  
circuit; we demonstrate that this is exactly the implementation given by Corollary \ref{efficientconstr}.

\begin{proposition}\label{AB}
The circuits $A$ and $B$ shown in Figure \ref{equivcircuits} 
are equivalent. 
\end{proposition}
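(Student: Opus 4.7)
The plan is to lift the abstract categorical equivalence established in Theorem \ref{maintheorem} and Corollary \ref{efficientconstr} directly to the Hilbert-space setting, using the translation table of Section \ref{real}. First I would identify circuit $A$ as the concrete realisation of $!^{2^n}(U)$ exactly as described in Figure \ref{inefficientbang} and Proposition \ref{bangaction}, so that its action on a computational basis ancilla $\ket{a}$ tensored with $\ket{\psi}$ is $\ket{a}\ket{\psi} \mapsto \ket{a}U^{a}\ket{\psi}$; and circuit $B$ as the concrete realisation of the factorisation produced by iterating Theorem \ref{maintheorem}, i.e.\ the cascade depicted in Figure \ref{Shorsdiagram}. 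Since both circuits are obtained by applying the translation of Table \ref{concretehilb} to morphisms of $(\mathbf{Hilb}_{FD},\otimes,\oplus)$ that have already been proved equal in the abstract category, the equivalence of $A$ and $B$ is essentially a matter of matching interpretations.

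More concretely, I would proceed by induction on $n$. The base case $n=1$ is handled by reading off both sides of Theorem \ref{maintheorem} in the circuit picture: on one side we have the direct sum $1_X \oplus U$ realised as a single controlled-$U$, and on the other side the decomposition via $\delta$, the multiplicative swap, and a distributivity isomorphism, which, unwound through the circuit conventions of Section \ref{notationalstuff}, yields exactly the same controlled operation. For the inductive step I would invoke Theorem \ref{maintheorem} with $f=U$, noting that the diagram there expresses precisely the relationship ``the $2^{n+1}$-fold iterator equals a copy of the $2^n$-fold iterator followed by a controlled $U^{2^{n-1}}$ on a fresh ancilla qubit'', which, under the translation of Section \ref{oplusinterp}, is exactly the recursive structure of circuit $B$.

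The heart of the argument is then to verify that the abstract canonical isomorphisms $dl$, $dr$, $\sigma$, $s$ appearing in Theorem \ref{maintheorem} correspond, under the dictionary of Table \ref{concretehilb} and the circuit conventions of Section \ref{strings} and Section \ref{notationalstuff}, to the implicit re-wirings and SWAP gates in the standard presentation of circuit $B$. I would make this explicit by tracking how a computational basis state $\ket{a_{n-1}\cdots a_0}\ket{\psi}$ is transformed, confirming that circuit $B$ sends it to $\ket{a_{n-1}\cdots a_0}U^{a}\ket{\psi}$ with $a=\sum_k a_k 2^{k}$, matching Proposition \ref{bangaction}.

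The main obstacle I expect is purely notational: bridging between the string-diagram boxes of Section \ref{strings} (where the distributivity isomorphisms are implicit and every crossing of the ``additive box boundary'' carries a $dl$ or $dr$) and the standard quantum-circuit depiction of controlled operations with non-adjacent controls (where the SWAP maps and distributivities are likewise suppressed). Careful bookkeeping, using the decomposition of the multiplicative symmetry via controlled additive symmetries in Figure \ref{multsym}, should make the correspondence transparent and reduce the proposition to a direct appeal to Corollary \ref{efficientconstr}.
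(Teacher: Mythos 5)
Your proposal is correct and takes essentially the same route as the paper: Proposition \ref{AB} is proved there in one line as a direct consequence of Theorem \ref{maintheorem} read through the dictionary of Table \ref{concretehilb}, with the string-diagram illustration of Figure \ref{secondfigure} doing exactly the translation work you describe (matching $dl$, $dr$, $\sigma$ to the controlled operations and re-wirings). The only difference is one of scope: the proposition concerns a single step of the recursion, so the induction on $n$ and the appeal to Corollary \ref{efficientconstr} that you set up belong to the subsequent corollary (which the paper proves ``by induction on Proposition \ref{AB}'') rather than to the proposition itself.
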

\begin{figure}[]
\caption{Two equivalent quantum circuits}\label{equivcircuits}
\begin{center}
\scalebox{0.8}{
\includegraphics{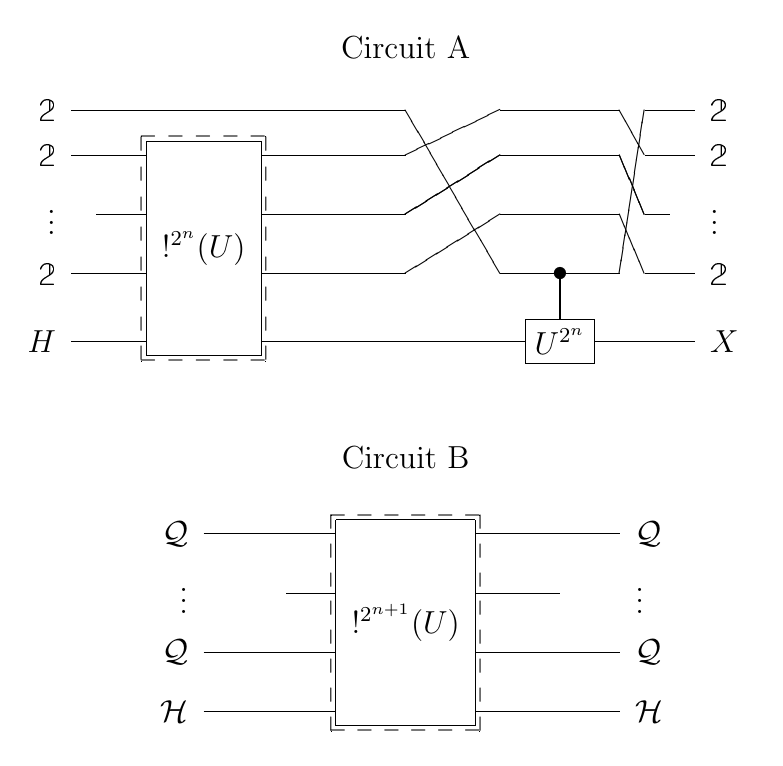}
}
\end{center}
\end{figure}
\begin{proof}
This follows directly from the Theorem \ref{maintheorem} -- in particular, the diagrammatic illustration given in Figure 2 makes it immediate. Note that the canonical multiplicative symmetry for qubits is drawn as a category-theoretic symmetry, rather than the traditional quantum circuit equivalent shown in Figure  \ref{multsym}. 
\qed
\end{proof}


A simple corollary of Proposition \ref{AB} above, and the notational simplifications of Section \ref{notationalstuff}, is that we may give a quantum circuit for $!^{2^n} (U)$ using $O(n)$ controlled quantum logic gates 
as follows:

\begin{corollary}
The circuit of Figure \ref{efficientbang} 
implements the $!^{2^n}(U)$ operation.
\end{corollary}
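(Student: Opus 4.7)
The plan is to prove the corollary by induction on $n$, using Proposition \ref{AB} as the inductive step. The base case $n=1$ is handled directly: $!^{2}(U) = 1_H \oplus U$ is a single controlled-$U$ gate on the qubit ancilla, which matches the innermost block of the circuit in Figure \ref{efficientbang}. For the inductive step, assume the circuit of Figure \ref{efficientbang} (restricted to $n-1$ ancillary qubits) correctly implements $!^{2^{n-1}}(U)$ using $O(n-1)$ controlled gates together with only wiring (canonical coherence isomorphisms).

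The inductive step would proceed as follows. By Proposition \ref{AB} (the concrete circuit-level statement of Theorem \ref{maintheorem}), the operation $!^{2^n}(U)$ acting on $\mathbb 2^{\otimes n}\otimes H$ decomposes into three pieces: a controlled-$U^{2^{n-1}}$ gate conditioned on the newly-added (most-significant) qubit, a copy of $!^{2^{n-1}}(U)$ acting on the remaining $(n-1)$ ancillary qubits together with $H$, and canonical distributivity/symmetry isomorphisms that amount merely to wire permutations in the circuit picture. Applying the induction hypothesis to the inner $!^{2^{n-1}}(U)$ block yields a circuit with $(n-1) + 1 = n$ controlled-power-of-$U$ gates, plus wiring, which is precisely the pattern in Figure \ref{efficientbang}.

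The gate-count bookkeeping is then immediate: at each of the $n$ levels of recursion we introduce exactly one controlled gate of the form $\mathrm{Ctrl}_1 U^{2^k}$ for $k = 0, 1, \ldots, n-1$, so the total number of controlled logic gates is $n$, and all other contributions are canonical coherence isomorphisms which, as discussed in Section \ref{notationalstuff}, are implemented by swap maps (themselves decomposable into a fixed number of primitive gates). Hence the circuit uses $O(n)$ controlled quantum logic gates.

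The main obstacle is not the induction itself but the bookkeeping of the canonical isomorphisms that appear in Theorem \ref{maintheorem}: the multiplicative symmetry $\sigma_{\mathbb 2, \mathbb 2^{\otimes (n-1)}}$ and the various distributivity maps $dl, dr$ and their inverses. One must verify that, under the diagrammatic conventions of Section \ref{strings} and the identification of controlled operations with direct sums from Section \ref{oplusinterp}, these wiring isomorphisms correspond exactly to the routing of control wires in Figure \ref{efficientbang} (in particular, that the new control qubit is routed to condition the single $U^{2^{n-1}}$ gate while the remaining ancillae continue to condition the recursive block). Once this correspondence is spelled out at a single level of the recursion --- which is essentially what Figure \ref{secondfigure} already illustrates --- iteration yields the result with no further subtleties.
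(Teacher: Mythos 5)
Your proof is correct and takes essentially the same approach as the paper, whose entire argument is the single line ``this follows by induction on Proposition \ref{AB}.'' Your version simply makes explicit the base case, the identification of the inductive step with the decomposition in Theorem \ref{maintheorem}, and the gate-count and wiring bookkeeping that the paper leaves implicit.
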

\begin{proof}This follows by induction on Proposition \ref{AB} above.\qed \end{proof}
\begin{figure}[]\caption{An efficient implementation of $!^{2^n}(U)$}\label{efficientbang}
\begin{center}
 \scalebox{0.8}{
\includegraphics{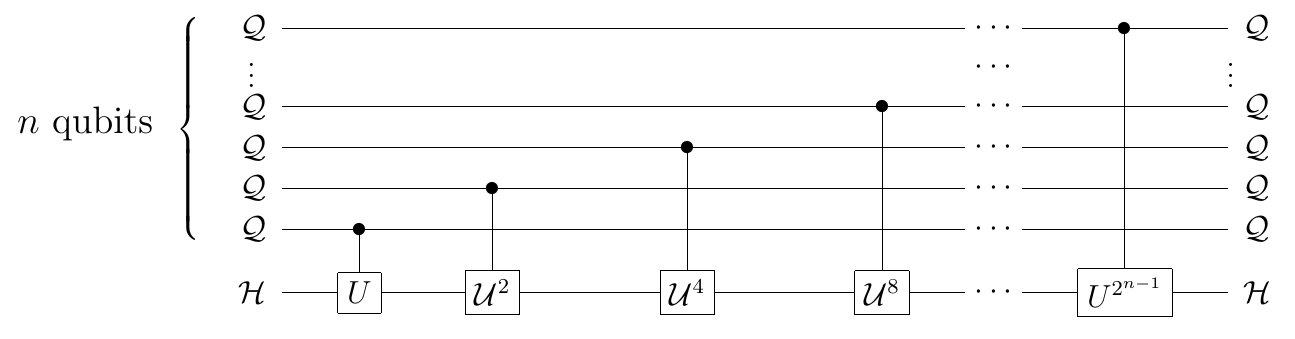}
}
\end{center}
\end{figure}

\begin{remark}
The efficient circuit of Figure  \ref{efficientbang} 
 is exactly the circuit used by P. Shor to implement modular exponentiation \cite{SH}. From a purely quantum circuit point of view, it is straightforward to demonstrate the equivalence of the circuits of Figure \ref{inefficientbang} 
 and Figure \ref{efficientbang}. The interest, from our point of view, is that this equivalence of circuits is an expression of a canonical coherence identity, and thus holds in any strongly distributive category. 
\end{remark}

\subsection{Oracles and Black Boxes}\label{blackboxes}
In referring to the circuit of Figure \ref{efficientbang} 
as requiring $O(n)$ primitive gates to implement $!^{2^n}(U)$, we have explicitly {\em not} considered the complexity of implementing $U,U^2,U^3,\ldots$. Rather, we have treated each of these operations as a `black box'. For concrete algorithms, this is a serious omission; in particular, any practical realisation of Shor's algorithm also requires some efficient way of implementing (controlled versions of) $U^{2^k}$, where the operation $U\ket{p} \ = \ \ket{rp\ (mod\ K)}$ is as described in Section \ref{periodfinding}.

Fortunately, such an efficient implementation also exists --- an oracle for the squaring operation $c\mapsto c^2\ (mod \ K)$ (up to some suitable ancilla, and garbage collection)  provides a simple, efficient way of implementing $U^{2^k}$, for $k=1,\ldots,n$. 
This is described in detail in \cite{SH}. Note that this technique is not available for arbitrary functions; rather, modular exponentiation is one of the few arithmetic functions for which such an efficient decomposition exists.

\section{Conclusions and Future Directions}\label{conclusions}
We have  demonstrated that the structural isomorphisms for strongly distributive categories have a role to play in understanding quantum algorithms -- or at least that perhaps familiar operations in quantum circuits can be given an abstract interpretation in terms of categorical coherence.

 Classically, equivalence up to canonical isomorphism is often used in program transformation, and it is pleasing, although not entirely unexpected, to see it in the quantum setting as well. Of more interest is how little of the machinery of categorical quantum mechanics we have used in establishing these transformations --- the only assumption required is that of two monoidal tensors related by distributivity up to isomorphism, and thus the constructions of this paper are valid in a wide range of different categorical and algebraic settings. 

\section*{Acknowledgements and Apologies}
\subsection*{Acknowledgements}
The author wishes to thank Robin Cockett for interesting discussions on the category theory of distributive categories, and applications. Similarly, thanks are due to Samson Abramsky and Bob Coecke for many discussions on the interpretation of distributivity and the direct sum, and on categorical quantum mechanics generally. Thanks are also due to Philip Scott, for discussions and references on categorical models of linear logic, with particular reference to the treatment of both distributivity and models of Girard's bang $!(\ )$ operation.

\subsection*{An Apology}
The work in this paper was first presented at a QICS quantum computing conference (Oxford 2010), under the title {\em `The role of coherence in quantum algorithms'} ({\em http://www.comlab.ox.ac.uk/quantum/content/1005021/}). After the talk, the speaker was approached by a delegation of experimental physicists, who explained that they had attended the talk because of the mention of `coherence' in the title, in the hope that it would be a break from the hard-core category theory presented in other talks. The author wishes to apologise for the (intentionally) misleading title, but hopes that they enjoyed the talk nevertheless.

\end{document}